\definecolor{bblue}{HTML}{4F81BD}
\definecolor{rred}{HTML}{C0504D}
\definecolor{ggreen}{HTML}{9BBB59}
\definecolor{ppurple}{HTML}{9F4C7C}
\definecolor{mDarkBrown}{HTML}{604c38}
\definecolor{mDarkTeal}{HTML}{23373b}
\definecolor{mMediumTeal}{HTML}{205A65}
\definecolor{mLightTeal}{HTML}{41a3c3}
\definecolor{mLightBrown}{HTML}{EB811B}
\definecolor{mMediumBrown}{HTML}{C87A2F}
\def\axesradargrid at (#1,#2){\draw[help lines,step=0.5] (0,0) grid (#1,#2);
\draw[->, thin](0,#2/2)--(#1,#2/2);
\node [right] at (#1,#2/2) {$\mathcal{R}$};
\draw[->, thin](#1/2,0)--(#1/2,#2);
\node [above] at (#1/2,#2) {$\mathcal{I}$};}
\def\axesradar at (#1,#2){
\draw[->, thin](0,#2/2)--(#1,#2/2);
\node [right] at (#1,#2/2) {$\mathcal{R}$};
\draw[->, thin](#1/2,0)--(#1/2,#2);
\node [above] at (#1/2,#2) {$\mathcal{I}$};}
\def\pointS at (#1:#2){
\draw [mDarkTeal,  thin,fill=mMediumTeal] (#1:#2) circle [radius=0.04];
\draw [mMediumBrown, thin,dashed] (#1:#2) circle [radius=0.2];}
\newcommand{\bs}{\boldsymbol}
\newcommand{\bb}{\mathbb}
\newcommand{\cl}{\mathcal}
\newcommand{\ts}{\textstyle}
\DeclareMathOperator{\sign}{sign}
\DeclareMathOperator{\im}{\mathsf{i}}
\DeclareMathOperator{\supp}{supp}
\newcommand{\scp}[2]{\langle #1, #2 \rangle}
\newcommand{\ie}{\emph{i.e.}, }
\newcommand{\eg}{\emph{e.g.}, }
\newtheorem{theorem}{Theorem}
\newtheorem{lemma}[theorem]{Lemma}
\theoremstyle{definition}
\newtheorem*{definition}{Definition}
\newcommand{\whp}{w.h.p.\xspace}
\newcommand{\rv}{{\em r.v.}\xspace}
\newcommand{\rvs}{{\em r.v.}s\xspace}
\newcommand{\iid}{%
    \ifmmode
        \mathrm{i.i.d.}%
    \else%
        i.i.d.\@\xspace%
    \fi%
}
\DeclareMathOperator*{\dist}{\sim}
\newcommand{\al}[1]{\sign_{\bb C}(#1)}
\newcommand{\htex}{3in}
\newcommand{\sizep}{1pt}
\newcommand{\vs}{}
\begin{document}
\title{($\ell_1,\ell_2$)-RIP and Projected Back-Projection Reconstruction for Phase-Only Measurements}
\author{Thomas Feuillen, Mike E. Davies, Luc Vandendorpe, and Laurent Jacques
\thanks{TF, LV and LJ are with the ICTEAM institute,  UCLouvain,
Belgium (e-mail: \{thomas.feuillen, luc.vandendorpe, laurent.jacques\}@uclouvain.be).
MD is with the Institute for Digital Communications, University of Edinburgh, Edinburgh EH9 3JL, U.K. (e-mail: mike.davies@ed.ac.uk). MD would like to acknowledge partial support from the ERC project C-SENSE (ERC-ADG-2015-694888). LJ is funded by the F.R.S.-FNRS.}\vs \vs}
\maketitle
\begin{abstract}
This letter analyzes the performances of a simple reconstruction method, namely the Projected Back-Projection~(PBP), for estimating the direction of a sparse signal from its phase-only (or amplitude-less) complex Gaussian random measurements, \ie an extension of one-bit compressive sensing to the complex field. To study the performances of this algorithm, we show that complex Gaussian random matrices respect, with high probability, a variant of the Restricted Isometry Property~(RIP) relating to the $\ell_1$-norm of the sparse signal measurements to their $\ell_2$-norm. This property allows us to upper-bound the reconstruction error of PBP in the presence of phase noise. Monte Carlo simulations are performed to highlight the performance of our approach in this phase-only acquisition model when compared to error achieved by PBP in classical compressive sensing.
\end{abstract}
\IEEEpeerreviewmaketitle

\vs
\section{Introduction}
\label{sec:introduction}
One aspect of compressive sensing (CS) is to reduce the number of
measurements needed to achieve (high) quality reconstruction of
low-complexity signals (\eg sparse)~\cite{Don06,CRT06}. Recent
research has also focused on reducing the accuracy of each
measurement, \eg by lowering their resolution (or bit-depth) in specific
quantization contexts~\cite{BR08, GLP12, JLB13, JHF10}. This paper
investigates the consequences of removing the information about the
amplitude of a complex signal, \ie using only the measurement phase
for the reconstruction. While phase-only (PO) acquisition can serve
as a stepping stone to study new quantizations schemes, \eg when quantizing the
measurement phase~\cite{Bou13b}, this sensing is tantamount to a complex form of one-bit quantization, \eg extensively studied in one-bit CS~\cite{Fou16, JLB13, BR08}.

Oppenheim and co-authors~\cite{OL81, OHL82} proved in a few seminal contributions that real, bandlimited signals can be reconstructed, up to a lost amplitude, from the phase of their Fourier transform. More recently, for phase-only CS (PO-CS) with complex Gaussian random matrices, Boufounos determined that a specific distance between the measurement phases of two sparse signals encodes their angular distance up to an additive distortion~\cite{Bou13}. While this distortion prevents us from proving perfect estimation of sparse signal direction, the author showed experimentally that this achievable, thanks to a greedy algorithm enforcing the phase consistency between the signal estimate and the PO measurements. 

In this context, our contributions are as follows. While the question of perfect recovery of signal direction remains open, we here focus on a simple, non-iterative algorithm, the Projected Back-Projection~(PBP, see Sec.~\ref{sec:system-model}), and show that this method accurately estimates the direction of sparse signals in PO-CS (Sec.~\ref{sec:PBP}). This is possible if the sensing matrix respects a variant of the RIP, the ($\ell_1, \ell_2$)-RIP in the complex field, which was previously introduced for (real) one-bit CS. Using tools from measure concentration~\cite{LT91}, we then prove that complex Gaussian random matrices satisfy, with high probability (\whp), the ($\ell_1, \ell_2$)-RIP if the number of measurements is large compared to the signal sparsity level (Sec.~\ref{sec:ell_1-ell_2-rip}). Note that the $\ell_1$-norm of this RIP prevents a simple proof of this result by recasting the complex field to the real field. Finally, extensive Monte Carlo simulations confirm that the PBP estimation error for PO-CS compares favorably to the one of an unaltered, linear CS scheme~(Sec.~\ref{sec:simulations}).
\medskip

{\bf Notations and conventions:} We denote matrices and vectors
with bold symbols, \eg $\bs \Phi \in \mathbb{C}^{m\times n}$, $\bs x
\in \mathbb{C}^{n}$, and scalar values with light symbols. We will often use
the following quantities: $[d]:=\{1,\,\cdots, d\}$ with
$d \in \bb N$; the complex number $\im$ such that $\im^2=-1$;
$\Re\{\lambda\}$ (or $\lambda^\Re$) and $\Im\{\lambda\}$ (or $\lambda^\Im$) are the real and
imaginary part of $\lambda \in \bb C$, respectively, and
$\lambda^\ast$ is its complex conjugate; $\bs A^H$ is the
conjugate transpose of $\bs A$; $\supp \bs x$ is the support of $\bs x \in
\bb C^d$; $|\cl S|$ is the cardinality of a finite set $\cl S$;  $\langle\bs x, \bs y  \rangle= \sum_{i=1}^d x_i  y_i^\ast$ is the scalar product between two vectors $\bs x, \bs y \in \mathbb{C}^d$; the
$\ell_p$-norm of $\bs x$ ($p\geq 1$) is defined as $\|\bs x\|_p=(\sum_{i=1}^d
|x_i|^p)^{1/p}$, with $\|\bs x\|_\infty = \max_i |x_i|$ and $\|\bs u \|_0:= |\supp (\bs u) |$, and the $\ell_{p,q}$-norm of $\bs A = (\bs a_1, \cdots, \bs a_d)^\top \in \bb C^{d \times d'}$ is $\|\bs A\|_{p,q} = (\sum_{i=1}^d \|\bs
a_i\|_p^q)^{1/q}$; $\bar{\mathbb{B}}^n := \{\bs u \in \bb C^n: \|\bs u\| \leq 1\}$; the Hadamard product is $\odot$; and the angle operator (applied componentwise
onto vectors) reads $\angle(c e^{\im \phi}) = \phi$ for $c >0$ and $\phi \in [-\pi, \pi]$.
We denote by $\cl N^{m\times n}(\mu, \sigma^2)$ and $\bb C\cl N^{m\times n}(\mu, 2\sigma^2)$ (dropping the symbol $n$ if $n=1$) the $m\times n$ random matrices with entries independently and identically distributed ($\iid$) as the normal distribution $\cl N(\mu, \sigma^2)$ and the complex normal distribution $\bb C\cl N(\mu, 2\sigma^2) \sim N(\mu^\Re, \sigma^2) + \im N(\mu^\Im, \sigma^2)$, respectively, for some mean $\mu$ and variance $\sigma^2$. An $s$-sparse $\bs x$ vector belongs to the set $\bar\Sigma^n_s :=\{\bs u \in \mathbb{C}^n, \|\bs u \|_0 \leq s \}$. Given $g, g' \sim \mathcal{N}(0, \sigma^2)$, the random variable (\rv) $z:= |g+\im g'|$ is distributed as the Rayleigh distribution $\mathcal{R}(\sigma)$ with parameter~$\sigma$~\cite{PP02}.
\vs
\section{Phase-only sensing model}
\label{sec:system-model}
Let us consider a complex $s$-sparse vector $ \bs x_0 \in
\bar \Sigma^n_s$. Given a complex matrix $\bs \Phi \in \bb C^{m \times
n}$, this work is concerned with the following noisy
non-linear sensing model~\cite{Bou13}, which generalizes one-bit CS
\cite{Fou16, XJ18} to the complex field:
\begin{equation}
\label{eq:phase-only-sensing-model}
\bs z = \al{\bs \Phi \bs x_0} \odot  e^{\im \bs \xi},
\end{equation}
where $\al{\cdot}$ is the \emph{complex signum} operator, applied
component-wise onto vectors, \ie $\al{\lambda} = \lambda/|\lambda|$ for
$\lambda \in \bb C \setminus \{0\}$, and $\bs \xi$ stands for a possible corruption
of the measurement phase (with $\xi_i \in [0, 2\pi)$, $i\in [m]$). 
The matrix $\bs \Phi$ can be, \eg a complex Gaussian
random matrix (see Sec.~\ref{sec:ell_1-ell_2-rip}). 

The sensing model~\eqref{eq:phase-only-sensing-model} thus
discards the amplitudes of the measurements $\bs \Phi
\bs x_0$; estimating $\bs x_0$ from $\bs z$ is possible
only up to a global unknown normalization of $\bs x_0$, \ie only the direction $\bs x_0/\|\bs x_0\|_2$ can be estimated. 

We aim to show that the projected back projection (PBP) algorithm
\cite{XJ18,Fou16} accurately estimates the direction of complex sparse signals
provided the complex sensing matrix respects a variant of the RIP property (see
Sec.~\ref{sec:PBP}). Given $s \in [n]$, the sensing matrix $\bs \Phi$, and the
measurement vector $\bs z$, this algorithm is simply defined as
\begin{equation}
\label{eq:PBP}
\hat{\bs x}= {\sf H}_s \big( \bs \Phi^H \bs z \big),
\tag{\rm PBP}
\end{equation}
where ${\sf H}_{s}(\bs u)$ is the hard thresholding operator setting all
of the components of the vector $\bs u$ to zero but the $s$ strongest in
amplitude (which are unchanged). For CS,~\ref{eq:PBP} is often used as the first iteration of more complex iterative
methods such as iterative hard thresholding (IHT)
\cite{BD09,XJ18}. Despite its simplicity, analyzing~\ref{eq:PBP}
can thus lead to better iterative reconstruction algorithms for PO-CS.
\vs
\section{Bound on the PBP reconstruction error}
\label{sec:PBP}
In CS theory, the error of most signal reconstruction algorithms is
controlled by the restricted isometry property --- or ($\ell_2,
\ell_2$)-RIP --- of the sensing matrix~\cite{FH13}. This amounts to
asking that for some $\delta > 0$, 
$$
(1-\delta)\| \bs x \|_2^2\leq
\|\bs \Phi \bs x\|_2^2 \leq (1+\delta)\| \bs x \|_2^2,
$$ 
holds true for all sparse vectors~$\bs x$. For instance, if the (real or complex) matrix $\bs \Phi$ respects the ($\ell_2,\ell_2$)-RIP over all $2s$-sparse vectors and one observes a $s$-sparse vector from the model $\bs y = \bs \Phi \bs x_0$, the error of the estimate $\hat{\bs x} = H_s(\bs \Phi^H \bs y)$ is bounded as $\|\bs x_0 - \hat{\bs x}\| = O(\delta)$~\cite{FH13,XJ18}.  

As will be clear below, the capacity of~\ref{eq:PBP} to
estimate a sparse vector $\bs x_0$ from its complex, phase-only observations $\bs z$ in~\eqref{eq:phase-only-sensing-model} depends on the following RIP variant.
\begin{definition}
\label{def:ripl1l2}
Given $\delta>0$, the matrix $\bs \Phi \in \mathbb{C}^{m \times n}$ satisfies the ($\ell_1,\ell_2$)-RIP($s$, $\delta$) if, for all $\bs x \in \bar\Sigma^n_s$, 
\begin{equation}
(1-\delta)\|\bs x \|_2 \leq \| \bs \Phi \bs x\|_1 \leq (1+\delta)\|\bs x \|_2.
\end{equation}
\end{definition}
This property was introduced for real one-bit CS~\cite{Fou16,Pla14}; with it, specific algorithms
(including~\ref{eq:PBP}) yield a good estimate of a real sparse signal from the sign of its
random measurements. Moreover, provided that $m$ is large compared to $s$, different types of real random matrix
constructions, such as Gaussian random matrices~\cite[Lemma
2.1]{Pla14}\cite{JHF10} or randomly subsampled Gaussian circulant
matrices~\cite{Dir17}, have been shown to respect the
($\ell_1,\ell_2$)-RIP($s$, $\delta$) \whp. 

To bound the reconstruction error of~\ref{eq:PBP}, we first need the following
lemma that is adapted from~\cite[Lemma 3]{Fou16}.
\begin{lemma}
\label{lem:implication-of-RIPL1L2} 
If $\bs \Phi$ satisfies the ($\ell_1,\ell_2$)-RIP$(\delta,s)$ for
$0<\delta<1$ and $s\in[n]$, then for any vector
$\bs x \in \mathbb{C}^n$ with unit $\ell_2$-norm such that $\supp \bs x \subset \mathcal S \subset
[n]$ with $|\cl S| = s$, 
\begin{equation}
\label{eq:pbp-oracle-error}
\ts \big\|{\sf H}_{\cl S}(\bs \Phi^H \al{\bs \Phi \bs x}  \big)  - \bs x \big\|_2 \leq \sqrt{5 \delta}.
\end{equation}
\end{lemma}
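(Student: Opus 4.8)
The plan is to show that $\bs u := \bs \Phi^H \al{\bs \Phi \bs x}$, once restricted to $\cl S$, is close to $\bs x$, and to split that error into a component along $\bs x$ and one transverse to it. Write $\bs a := \al{\bs \Phi \bs x}$, so each entry of $\bs a$ has unit modulus. Since $\supp \bs x \subset \cl S$ and $|\cl S| = s$, the operator ${\sf H}_{\cl S}$ merely zeroes the coordinates outside $\cl S$, so ${\sf H}_{\cl S}(\bs \Phi^H \bs a)$ is the restriction $\bs u_{\cl S}$ of $\bs u$ to $\cl S$, and it suffices to bound $\|\bs u_{\cl S} - \bs x\|_2$. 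Two elementary facts will be used repeatedly: $\Re\langle \bs a, \bs y\rangle \le \|\bs y\|_1$ for every $\bs y \in \bb C^m$ (because $|a_i| = 1$), and $\langle \bs a, \bs \Phi \bs x\rangle = \sum_i |(\bs \Phi \bs x)_i| = \|\bs \Phi \bs x\|_1$, which is real and, by the $(\ell_1,\ell_2)$-RIP applied to the $s$-sparse unit vector $\bs x$, lies in $[1-\delta,1+\delta]$.

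Next I would decompose $\bs u_{\cl S} = \alpha \bs x + \bs w$ with $\bs w \perp \bs x$ and $\supp \bs w \subset \cl S$. Because $\langle \bs u, \bs x\rangle = \langle \bs a, \bs \Phi \bs x\rangle = \|\bs \Phi \bs x\|_1$ and $\|\bs x\|_2 = 1$, one gets $\alpha = \|\bs \Phi \bs x\|_1$, hence $|\alpha - 1| \le \delta$. By orthogonality, $\|\bs u_{\cl S} - \bs x\|_2^2 = |\alpha - 1|^2 + \|\bs w\|_2^2 \le \delta^2 + \|\bs w\|_2^2$, so the whole task reduces to proving $\|\bs w\|_2 \le 2\sqrt{\delta}$.

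For the transverse term, assume $\bs w \ne 0$ and set $\bs v := \bs w/\|\bs w\|_2$: this is a unit vector, orthogonal to $\bs x$, supported on $\cl S$, with $\|\bs w\|_2 = \Re\langle \bs u, \bs v\rangle = \Re\langle \bs a, \bs \Phi \bs v\rangle =: \beta \ge 0$. The key idea is to feed the RIP the $s$-sparse vectors $\bs x + t\bs v$ for $t > 0$ (note $\supp(\bs x + t\bs v) \subset \cl S$ and, by orthogonality, $\|\bs x + t\bs v\|_2 = \sqrt{1+t^2}$):
\begin{equation}
(1-\delta) + t\beta \;\le\; \Re\langle \bs a,\, \bs \Phi(\bs x + t\bs v)\rangle \;\le\; \|\bs \Phi(\bs x + t\bs v)\|_1 \;\le\; (1+\delta)\sqrt{1+t^2},
\end{equation}
where the first inequality uses $\bb R$-linearity of $\Re\langle \bs a,\cdot\rangle$ together with $\Re\langle \bs a, \bs \Phi \bs x\rangle = \|\bs \Phi \bs x\|_1 \ge 1-\delta$ and $\Re\langle \bs a, \bs \Phi \bs v\rangle = \beta$, the second uses $\Re\langle \bs a, \bs y\rangle \le \|\bs y\|_1$, and the third is the RIP upper bound. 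Choosing the value $t = 2\sqrt{\delta}/(1-\delta)$, for which $\sqrt{1+t^2} = (1+\delta)/(1-\delta)$, makes the right-hand side equal $(1+\delta)^2/(1-\delta)$ and yields $t\beta \le 4\delta/(1-\delta)$, i.e. $\beta \le 2\sqrt{\delta}$. Hence $\|\bs w\|_2 \le 2\sqrt{\delta}$, and combining with the previous paragraph $\|\bs u_{\cl S} - \bs x\|_2^2 \le \delta^2 + 4\delta \le 5\delta$ since $0<\delta<1$, which is exactly~\eqref{eq:pbp-oracle-error}.

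I expect the only genuinely non-mechanical step to be the tuning of the perturbation size $t$: it is the minimizer over $t>0$ of $t \mapsto \big((1+\delta)\sqrt{1+t^2} - (1-\delta)\big)/t$, and one must hit the exact value $2\sqrt{\delta}/(1-\delta)$ to land on the clean $2\sqrt{\delta}$ transverse bound that makes the final constant $\sqrt{5}$ rather than something larger (linearizing $\sqrt{1+t^2}$, say, loses a factor and breaks the bound when $\delta$ is not small). Everything else — the identity ${\sf H}_{\cl S}(\bs \Phi^H \bs a) = \bs u_{\cl S}$, the Pythagorean split, and the two $\ell_1$-versus-real-part inequalities — is routine, and since all the inner products appearing above are real by construction, the complex setting causes no real difficulty here, in contrast with the proof that Gaussian matrices satisfy the property.
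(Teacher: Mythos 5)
Your proof is correct: the orthogonal decomposition $\bs u_{\cl S}=\alpha\bs x+\bs w$, the identity $\langle \bs\Phi^H\bs a,\bs x\rangle=\|\bs\Phi\bs x\|_1\in[1-\delta,1+\delta]$, and the test vectors $\bs x+t\bs v$ with the optimal $t=2\sqrt{\delta}/(1-\delta)$ all check out, giving $\delta^2+4\delta\le 5\delta$. The paper itself omits this proof and simply cites \cite[Lemma 3]{Fou16}; your argument is precisely that lemma's perturbation scheme carried over to the complex field (where the relevant inner products are real by construction), so it supplies exactly the adaptation the authors had in mind.
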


We can now determine the main result of this section, which derives
from an adaptation of~\cite[Thm 8]{Fou16} to the complex field.  
\begin{theorem}
  \label{thm:bound-pbp-reconstr}
  If $\bs \Phi$ satisfies ($\ell_1,\ell_2$)-RIP($2s$,$\delta$), then the~\ref{eq:PBP} estimate
  $\hat{\bs x}$ of any signal $\bs x_0 \in \bar\Sigma^n_s$ with $\|\bs x_0\|_2 = 1$ observed via~\eqref{eq:phase-only-sensing-model} with $\|\bs \xi\|_\infty
  \leq \tau$ respects 
\begin{equation}
\label{eq:nooracle}
\|\bs x_0 -\hat{\bs x} \|_2 \leq 2\sqrt{5 \delta} + 4\tau.
\end{equation}
\end{theorem}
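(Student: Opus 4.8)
The plan is to follow the standard two–stage analysis of projected back–projection: first reduce the error to an ``oracle'' quantity supported on a set of size $2s$ via a best–$s$–term approximation argument, then peel the phase corruption $e^{\im\bs\xi}-\bs 1$ off as an additive perturbation, so that Lemma~\ref{lem:implication-of-RIPL1L2} controls the noiseless contribution while the $\ell_1$ side of the $(\ell_1,\ell_2)$-RIP controls the noise.

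Concretely, I would set $\mathcal S_0:=\supp\bs x_0$ (so $|\mathcal S_0|\le s$), let $\hat{\mathcal S}:=\supp\hat{\bs x}$ be the support selected by ${\sf H}_s$, and put $\mathcal T:=\mathcal S_0\cup\hat{\mathcal S}$, so $|\mathcal T|\le 2s$. Writing $\bs u:=\bs\Phi^H\bs z$, the vector $\hat{\bs x}$ is $\bs u$ restricted to $\hat{\mathcal S}\subseteq\mathcal T$, and $\hat{\mathcal S}$ still collects $s$ of the largest-magnitude entries of $\bs u_{\mathcal T}$; hence $\hat{\bs x}$ is a best $s$-term approximation of $\bs u_{\mathcal T}$, and since $\bs x_0$ is $s$-sparse with support inside $\mathcal T$ we get $\|\bs u_{\mathcal T}-\hat{\bs x}\|_2\le\|\bs u_{\mathcal T}-\bs x_0\|_2$. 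The triangle inequality then yields $\|\bs x_0-\hat{\bs x}\|_2\le\|\bs x_0-\bs u_{\mathcal T}\|_2+\|\bs u_{\mathcal T}-\hat{\bs x}\|_2\le 2\|(\bs\Phi^H\bs z)_{\mathcal T}-\bs x_0\|_2$, so it suffices to prove $\|(\bs\Phi^H\bs z)_{\mathcal T}-\bs x_0\|_2\le\sqrt{5\delta}+2\tau$.

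To bound the latter I would split $\bs z=\al{\bs\Phi\bs x_0}+\bs w$ with $\bs w:=\al{\bs\Phi\bs x_0}\odot(e^{\im\bs\xi}-\bs 1)$, so that $(\bs\Phi^H\bs z)_{\mathcal T}-\bs x_0=\big((\bs\Phi^H\al{\bs\Phi\bs x_0})_{\mathcal T}-\bs x_0\big)+(\bs\Phi^H\bs w)_{\mathcal T}$. For the noiseless term, pad $\mathcal T$ to a set $\mathcal S'$ of cardinality $2s$ and apply Lemma~\ref{lem:implication-of-RIPL1L2} at sparsity level $2s$ to the unit-norm vector $\bs x_0$ (supported in $\mathcal S'$): this gives $\|{\sf H}_{\mathcal S'}(\bs\Phi^H\al{\bs\Phi\bs x_0})-\bs x_0\|_2\le\sqrt{5\delta}$, and restricting further to $\mathcal T\supseteq\supp\bs x_0$ only drops nonnegative terms, so the noiseless term is $\le\sqrt{5\delta}$. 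For the noise term, use duality together with Hölder's inequality and the \emph{upper} $(\ell_1,\ell_2)$-RIP$(2s,\delta)$: for every $\bs v$ with $\supp\bs v\subseteq\mathcal T$ and $\|\bs v\|_2\le 1$, $|\langle\bs v,(\bs\Phi^H\bs w)_{\mathcal T}\rangle|=|\langle\bs\Phi\bs v,\bs w\rangle|\le\|\bs\Phi\bs v\|_1\,\|\bs w\|_\infty\le(1+\delta)\|\bs v\|_2\,\|\bs w\|_\infty\le(1+\delta)\|\bs w\|_\infty$, and since $|w_i|=|e^{\im\xi_i}-1|=2|\sin(\xi_i/2)|\le|\xi_i|\le\tau$ we conclude $\|(\bs\Phi^H\bs w)_{\mathcal T}\|_2\le(1+\delta)\tau\le 2\tau$. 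Adding the two estimates and doubling gives~\eqref{eq:nooracle}.

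The argument is short once Lemma~\ref{lem:implication-of-RIPL1L2} is available, so the only genuine obstacle is choosing the right pairing for the noise term: because the phase perturbation is controlled solely in $\ell_\infty$ (by $\tau$), it must be tested against $\|\bs\Phi\bs v\|_1$, which is exactly what the $\ell_1$ member of Definition~\ref{def:ripl1l2} delivers — a plain Cauchy–Schwarz bound would instead cost a spurious $\sqrt{m}$ factor, which is also why the result does not follow from a real $\ell_2$-RIP. Two minor points to watch are that Lemma~\ref{lem:implication-of-RIPL1L2} is stated for exact sparsity, hence the padding of $\mathcal T$ to size $2s$, and that $\delta\in(0,1)$ (inherited from that lemma) is used to bound $1+\delta\le 2$.
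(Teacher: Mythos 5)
Your proof is correct and follows essentially the same route as the paper's: the same reduction $\|\bs x_0-\hat{\bs x}\|_2\le 2\|\bs x_0-{\sf H}_{\cl S}(\bs\Phi^H\bs z)\|_2$ over the union support of size $\le 2s$, the same invocation of Lemma~\ref{lem:implication-of-RIPL1L2} for the noiseless part, and the same duality/H\"older pairing of the phase perturbation (bounded in $\ell_\infty$ by $|e^{\im\xi_i}-1|\le|\xi_i|$) against the upper $\ell_1$ bound of the $(\ell_1,\ell_2)$-RIP. Your explicit padding of $\cl T$ to cardinality exactly $2s$ is a minor tidying of a step the paper leaves implicit; otherwise the two arguments coincide.
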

\begin{proof}
Let $\cl S_0$ and $\cl T$ be the $s$-sparse supports of $\bs x_0$ and $\hat{\bs x}$,
respectively. Writing $\cl S := \cl S_0 \cup \cl T$ (with $|\cl S|
\leq 2s$) and $\bs a = \bs \Phi^H \bs z$, we first note that
$\|\bs x_0 - \hat{\bs x}\|_2 \leq \|\bs x_0 - {\sf H}_{\cl S}(\bs a)\|_2 +
\|\hat{\bs x} - {\sf H}_{\cl S}(\bs a)\|_2$, so that $\|\bs x_0 - \hat{\bs x}\|_2 
\leq 2\|\bs x_0 - {\sf H}_{\cl S}(\bs a)\|_2$ since $\hat{\bs x}$ is the best $s$-term approximation of
both $\bs a$ and ${\sf H}_{\cl S}(\bs a)$. The triangular inequality and Lemma~\ref{lem:implication-of-RIPL1L2} then provide
\begin{multline}
  \ts \|\bs x_0 - {\sf H}_{\cl S}(\bs a)\|_2 = \|\bs x_0 -{\sf H}_{\cl S}\big(\bs \Phi^H [\al{\bs \Phi \bs x_0} \odot  \exp(\im \bs \xi)] \big)\|_2 \\
\leq \sqrt{5 \delta} + \|{\sf H}_{\cl S}\big(\bs \Phi^H [\al{\bs \Phi \bs x_0} \odot (1-  e^{\im \bs \xi})]\big)\|_2.
\end{multline}
Since $\bs \Phi$ respects the ($\ell_1,\ell_2$)-RIP($2s$,$\delta$), we get
\begin{align}
&\ts\|{\sf H}_{\cl S}\big(\bs \Phi^H [\al{\bs \Phi \bs x_0} \odot (1-
e^{\im \bs \xi})]\big)\|_2\\
&\quad\ts= \sup_{\bs u \in \bar{\bb B}^n} \scp{ \bs \Phi ({\sf H}_{\cl S}(\bs
  u))}{ \al{\bs \Phi \bs x_0} \odot (1-
  e^{\im \bs \xi})}\\
&\quad\ts\leq \|1-  e^{\im \bs \xi}\|_\infty\ \sup_{\bs u \in \bar{\bb B}^n}\ \|\bs \Phi ({\sf H}_{\cl S}(\bs
u))\|_1\\
&\quad\ts\leq 2 \|1-  e^{\im \bs \xi}\|_\infty \leq 2 \|\bs \xi\|_\infty \leq 2 \tau.
\end{align}
Gathering all bounds provides the result.
\end{proof}
Interestingly,~\eqref{eq:nooracle} shows that one can still accurately estimate the direction of a complex sparse signal in PO-CS if $\bs \Phi$ is ($\ell_1,\ell_2$)-RIP$(2s, \delta)$ with a small constant~$\delta$. 

Moreover, as clarified in Sec.~\ref{sec:ell_1-ell_2-rip}, \eqref{eq:nooracle} allows us to understand how, for complex Gaussian sensing matrices, the error of PBP decays when $m$ increases. Indeed, up to some missing log factors, we prove in Thm.~\ref{thm:ripl1l2-for-gaussian} that complex Gaussian random matrices satisfy the ($\ell_1,\ell_2$)-RIP$(2s, \delta)$ \whp provided $m \geq C \delta^{-2} s$ for some $C>0$. By saturating this condition, we see that, for noiseless PO-CS,~\ref{eq:PBP} achieves the error
\begin{equation}
\label{eq:PBP-decay-in-m}
\|\bs x_0 - \hat{\bs x}\|_2 = O\big(\sqrt[4]{s/m}\big)    
\end{equation}
when $m$ increases, \ie which tends to zero for large $m$. 

This evolution of the PBP error meets the one encountered for real one-bit CS~\cite{Fou16} and non-linear CS~\cite{PV16}. However, this behavior is a bit pessimistic compared to the experimental decay in $O(\sqrt{s/m})$ reached by simulations (see Sec.~\ref{sec:simulations}). The exponent over $\delta$ in~\eqref{eq:nooracle} could thus be improved from $\sqrt \delta$ to~$\delta$. This would then match the performances of PBP in linear CS (see the beginning of this section) and dithered quantized CS~\cite{XJ18,JC17} where it reaches an error bounded by $O(\delta)$ for ($\ell_2,\ell_2$)-RIP$(2s, \delta)$ sensing matrices, \ie a decay in $O(\sqrt{s/m})$ for Gaussian random sensing matrices. 
%
\vs
\section{The ($\ell_1,\ell_2$)-RIP of Complex Gaussian Matrices}
\label{sec:ell_1-ell_2-rip}
While one easily extends the ($\ell_2, \ell_2$)-RIP of certain random matrix constructions from the real to the complex fields --- \eg by recasting the signal space $\mathbb{C}^n$ and measurement domain $\mathbb{C}^m$ to $\mathbb{R}^{2n}$  and $\mathbb{R}^{2m}$, respectively~\cite{FH13} --- such an extension for ($\ell_1, \ell_2$)-RIP matrices is not known. 


Fortunately, using the tools of measure concentration~\cite{LT91}, we prove below that complex Gaussian random matrices $\bs \Phi$ respects the ($\ell_1,\ell_2$)-RIP \whp provided $m$ is large compared to the signal sparsity. To show this, we first establish that, given $\bs x \in \bb C^n$, $\bb E\|\bs \Phi \bs x \|_1$ is proportional to $\|\bs x\|_2$ since each random variable $|(\bs \Phi \bs x)_i|$ is Rayleigh distributed.
\begin{lemma}
\label{lem:Rayleigh}
Given $\bs x \in \mathbb{C}^n$  and a random matrix $\bs \Phi \sim
\mathbb{C}\mathcal{N}^{m\times n}(0,\sigma^2)$ with $\sigma :=\frac{1}{m} \frac{\sqrt{2}}{\sqrt{\pi}}$, we have 
\begin{equation}
\mathbb{E}\big[ \| \bs \Phi \bs x \|_1 \big]= \| \bs x \|_2.
\label{eq:expectation_l1}
\end{equation}
\end{lemma}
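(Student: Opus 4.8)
The plan is to reduce the $\ell_1$-norm of $\bs\Phi\bs x$ to a sum of independent Rayleigh random variables and then compute the first moment of a single Rayleigh variable. First I would observe that, by definition of the $\ell_1$-norm, $\mathbb{E}[\|\bs\Phi\bs x\|_1] = \sum_{i=1}^m \mathbb{E}\big[|(\bs\Phi\bs x)_i|\big]$, so it suffices to understand the distribution of a single entry $(\bs\Phi\bs x)_i = \sum_{j=1}^n \Phi_{ij}x_j$. Since the $\Phi_{ij}$ are \iid $\mathbb{C}\mathcal{N}(0,\sigma^2)$, one writes $\Phi_{ij} = g_{ij} + \im g'_{ij}$ with $g_{ij}, g'_{ij} \sim \mathcal{N}(0,\sigma^2/2)$ independent (following the paper's convention that $\mathbb{C}\mathcal{N}(0,2\sigma^2)$ has real and imaginary parts of variance $\sigma^2$; care with the factor of $2$ in the normalization is needed). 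The real and imaginary parts of $(\bs\Phi\bs x)_i$ are then each a real linear combination of independent centered Gaussians, hence centered Gaussian, and a short covariance computation using the independence of real and imaginary parts of each $\Phi_{ij}$ shows that $\Re\{(\bs\Phi\bs x)_i\}$ and $\Im\{(\bs\Phi\bs x)_i\}$ are independent, each with variance $\tfrac{1}{2}\|\bs x\|_2^2$ times the per-component variance.

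Next I would identify $(\bs\Phi\bs x)_i$ as a centered complex Gaussian whose real and imaginary parts are \iid $\mathcal{N}(0, s_0^2)$ for an appropriate $s_0$ proportional to $\|\bs x\|_2$; by the Rayleigh fact recalled at the end of the Notations paragraph, $|(\bs\Phi\bs x)_i| \sim \mathcal{R}(s_0)$. Then I would invoke the standard first-moment formula $\mathbb{E}[z] = s_0\sqrt{\pi/2}$ for $z \sim \mathcal{R}(s_0)$ (an elementary integral, not worth grinding through here). Summing over the $m$ independent entries gives $\mathbb{E}[\|\bs\Phi\bs x\|_1] = m\, s_0 \sqrt{\pi/2}$, and substituting $s_0$ in terms of $\sigma$ and $\|\bs x\|_2$, then plugging in $\sigma = \tfrac{1}{m}\tfrac{\sqrt 2}{\sqrt\pi}$, should collapse all constants and the factor $m$ to leave exactly $\|\bs x\|_2$.

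The main obstacle — really the only delicate point — is bookkeeping the variance conventions correctly: the paper writes $\mathbb{C}\mathcal{N}^{m\times n}(0,\sigma^2)$ but defines the complex normal via $\mathbb{C}\mathcal{N}(\mu, 2\sigma^2) \sim N(\mu^\Re,\sigma^2)+\im N(\mu^\Im,\sigma^2)$, so one must be careful whether the per-coordinate (real/imaginary) variance attached to $\bs\Phi \sim \mathbb{C}\mathcal{N}^{m\times n}(0,\sigma^2)$ is $\sigma^2$ or $\sigma^2/2$, and then track this through the linear combination, the Rayleigh parameter, and the final substitution so that the constants $\sqrt{2}$, $\sqrt{\pi}$, and the $1/m$ in $\sigma$ cancel exactly. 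Everything else is a direct consequence of linearity of expectation, the stability of Gaussians under real-linear maps, the independence of the real and imaginary parts, and the known mean of a Rayleigh variable.
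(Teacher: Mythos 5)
Your proposal is correct and follows essentially the same route as the paper's proof: decompose each entry of $\bs \Phi \bs x$ into real and imaginary parts, check that these are independent centered Gaussians with variance proportional to $\|\bs x\|_2^2$, identify the modulus as a Rayleigh variable, and apply the Rayleigh first-moment formula before summing over the $m$ rows. The factor-of-two bookkeeping you flag is indeed the only delicate point, and the paper resolves it by taking the real and imaginary parts of each $\Phi_{ij}$ to have variance $\sigma^2$ each (so that $\Gamma^\Re_i,\Gamma^\Im_i \sim \mathcal{N}(0,\sigma^2\|\bs x\|_2^2)$ and the Rayleigh parameter is $\sigma\|\bs x\|_2$), which is exactly what makes the constants collapse to $\|\bs x\|_2$.
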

\vs
\begin{proof}
By decomposing both the entries of $\bs \Phi$ and the components of $\bs x $ into their real and imaginary parts, we get
$$
\ts \| \bs \Phi \bs x \|_1  = \sum_{i=1}^m |\sum_{j=1}^n  \Phi_{ij} x_j | =\sum_{i=1}^m \big |\sum_{j=1}^n g^\Re_{ij} + \im g^\Im_{ij} \big|,
$$
with $g^\Re_{ij}:= \Phi^\Re_{ij}  x^\Re_j - \Phi^\Im_{ij}  x^\Im_j$ and $g^\Im_{ij} := \Phi^\Re_{ij}  x^\Im_j + \Phi^\Im_{ij}  x^\Re_j$.

We note that, for all indices $i,i' \in [m]$ and $j,j' \in [n]$,  $\Phi_{ij}^\Re$ and $\Phi_{ij}^\Im$ are Gaussian random variables with $\mathbb{E}[\Phi_{ij}^\Re ]=\mathbb{E}[\Phi_{ij}^\Im ]=\mathbb{E}[\Phi_{ij}^\Re \Phi_{i'j'}^\Im ]= 0$. 
Therefore, $g^\Re_{ij}, g^\Im_{ij} \dist_{\iid}
\mathcal{N}(0,\sigma^2 |x_j|^2)$ and  a simple computation provides $\bb E g^\Re_{ij}
g^\Im_{i'j'} = 0$. The \rvs 
$\Gamma^\Re_i:=\sum_{j=1}^n g^\Re_{ij}  $ and
$\Gamma^\Im_i:=\sum_{j=1}^n g^\Im_{ij}  $  are
thus independent and distributed as $\mathcal{N}(0,\sigma^2
\|\bs x \|_2^2)$ for all $i\in[m]$. Consequently, 
$$
\ts \mathbb{E}\big[ \| \bs \Phi \bs x \|_1 \big] = \sum_{i=1}^m \mathbb{E}\big[| \Gamma^\Re_i+\im \Gamma^\Im_i  | \big]= m  \mathbb{E}\big[ \Gamma_0 \big],
$$
where $\Gamma_0$ follows a Rayleigh distribution $\mathcal{R}(\sigma
\|\bs x \|_2)$. Since $\mathbb{E}[\Gamma_0]= \sigma
\sqrt{\frac{\pi}{2}} \|\bs x \|_2 $~\cite{PP02} and
$\sigma=\frac{1}{m}\sqrt{\frac{2}{\pi}}$, we find $\mathbb{E}\big[ \| \bs \Phi \bs x \|_1 \big]= \sigma \| \bs x \|_2 \sqrt{\frac{\pi}{2}}m=\|\bs x \|_2$.\vspace{2mm}
\end{proof}

We also need this classical result from Ledoux and Talagrand
\cite[Eq. 1.6]{LT91}, see also~\cite[Lemma 5]{JHF10}. 
\begin{lemma}
\label{lem:LD-concent} If the function $F$ is Lipschitz with $\lambda=\| F \|_{\text{Lip}}$, then, for $r>0$ and $\bs \gamma \sim \mathcal{N}^{m}(0,1)$, 
\begin{equation}
\ts \bb P \big( \big| F(\bs \gamma) -\bb E(F(\bs \gamma)) \big| > r\big) \leq 2 \exp(-\frac{1}{2} r^2 \lambda^{-2}).
\end{equation}
\end{lemma}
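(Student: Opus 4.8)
The plan is to obtain this sub-Gaussian tail from the Gaussian logarithmic Sobolev inequality through the Herbst argument, which is the route giving the sharp constant $\tfrac12$ in the exponent. By homogeneity I rescale $F$ so that $\lambda = \|F\|_{\text{Lip}} = 1$, and by replacing $F$ with $F - \bb E[F(\bs\gamma)]$ I assume $\bb E[F(\bs\gamma)] = 0$. I would also reduce to smooth $F$: convolving $F$ with a narrow Gaussian kernel produces $C^\infty$ functions that stay $1$-Lipschitz (so $\|\nabla F\|_2 \le 1$ pointwise), keep the same mean in the limit, and converge to $F$ pointwise, so any tail bound proved for the mollifications passes to the limit by Fatou. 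Note that the moment generating function $H(\theta) := \bb E[e^{\theta F(\bs\gamma)}]$ is finite for all $\theta$, since a Lipschitz $F$ has at most linear growth and $\bs\gamma$ has all exponential moments.

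The key input is the Gaussian log-Sobolev inequality for the standard law on $\bb R^m$, $\mathrm{Ent}(g^2) := \bb E[g^2 \log g^2] - \bb E[g^2]\log\bb E[g^2] \le 2\,\bb E[\|\nabla g\|_2^2]$, itself a classical fact (e.g.\ by tensorising the one-dimensional inequality, or via the Ornstein--Uhlenbeck semigroup) that I would simply invoke. Taking $g = e^{\theta F/2}$ with $\theta>0$, the left-hand side is $\theta H'(\theta) - H(\theta)\log H(\theta)$ while $2\,\bb E[\|\nabla g\|_2^2] = \tfrac{\theta^2}{2}\bb E[\|\nabla F\|_2^2 e^{\theta F}] \le \tfrac{\theta^2}{2}H(\theta)$ using the Lipschitz bound. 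Dividing by $\theta^2 H(\theta)$ shows that $K(\theta) := \theta^{-1}\log H(\theta)$ satisfies $K'(\theta) \le \tfrac12$; since $K(0^+) = H'(0)/H(0) = \bb E[F(\bs\gamma)] = 0$, integrating gives $K(\theta)\le\theta/2$, i.e.\ $H(\theta)\le e^{\theta^2/2}$ for $\theta>0$, and the same applied to $-F$ handles $\theta<0$.

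A Chernoff bound then closes the proof: $\bb P(F(\bs\gamma) > r) \le e^{-\theta r}H(\theta) \le \exp(\tfrac{\theta^2}{2} - \theta r)$, which optimised at $\theta = r$ gives $\bb P(F(\bs\gamma) > r)\le e^{-r^2/2}$; applying this to $-F$ and taking a union bound yields $\bb P(|F(\bs\gamma) - \bb E F(\bs\gamma)| > r)\le 2e^{-r^2/2}$, and reinstating $\lambda$ replaces $r$ by $r/\lambda$, which is the claim. An equally short alternative would start from the Gaussian isoperimetric inequality and the extremality of half-spaces: for $1$-Lipschitz $F$ with median $m_F$, the $r$-enlargement of $\{F\le m_F\}$ lies inside $\{F\le m_F+r\}$, so $\bb P(F > m_F + r)$ is at most the standard normal tail at $r$, hence $\le\tfrac12 e^{-r^2/2}$, and symmetrically. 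I prefer the Herbst route only because it is mean-centered from the start, whereas the isoperimetric one then needs a (cheap but slightly lossy) passage from median to mean. Beyond that there is no genuine obstacle: the proof is the assembly of two classical ingredients, the log-Sobolev inequality and Chernoff's bound, with the mollification reduction the one routine technicality to check.
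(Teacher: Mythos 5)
Your proof is correct, but there is nothing in the paper to compare it against: the authors state this lemma as a classical fact and simply cite Ledoux--Talagrand \cite[Eq.~1.6]{LT91} (and \cite[Lemma 5]{JHF10}) without giving any argument. What you have written is a complete, standard derivation of Gaussian concentration for Lipschitz functions via the Herbst argument: the log-Sobolev step, the differential inequality $K'(\theta)\le\tfrac12$ for $K(\theta)=\theta^{-1}\log H(\theta)$ with $K(0^+)=\bb E[F(\bs\gamma)]=0$, and the optimized Chernoff bound are all carried out correctly and yield the sharp constant $\tfrac12$ in the exponent. The mollification reduction is handled adequately (Gaussian smoothing preserves the Lipschitz constant, and the tail bound passes to the limit), and your remark that the isoperimetric route would give the same exponent but centered at the median rather than the mean is an accurate reason to prefer Herbst here. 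In short: the paper uses this as a black box; you have supplied a valid proof of the black box, which is more than the statement requires in context but is not wrong in any step.
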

In our developments, $F$ will be of the following kind. 
\begin{lemma}
\label{lem:lipschitz}
 The functions $G: \bs u \in \mathbb{C}^m  \mapsto \|\bs u \|_1 \in \mathbb{R}_{+}$ and of $G' : (\bs u^\Re,\bs u^\Im) \in \mathbb{R}^{m \times 2} \mapsto \|(\bs u^\Re,\bs u^\Im)\|_{2,1}  \in \mathbb{R}_{+} $ have a Lipschitz constant equal to $\sqrt{m}$.
\end{lemma}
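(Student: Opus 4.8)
The plan is to observe first that $G$ and $G'$ are, up to a norm-preserving identification of their domains, one and the same function, so that a single Lipschitz estimate settles both cases. Writing $\bs u = \bs u^\Re + \im \bs u^\Im \in \bb C^m$ and viewing $(\bs u^\Re, \bs u^\Im)$ as the element of $\bb R^{m\times 2}$ whose $i$-th row is $(u_i^\Re, u_i^\Im)$, the definition of the $\ell_{2,1}$-norm gives $\|(\bs u^\Re, \bs u^\Im)\|_{2,1} = \sum_{i=1}^m \sqrt{(u_i^\Re)^2 + (u_i^\Im)^2} = \sum_{i=1}^m |u_i| = \|\bs u\|_1$, while the map $\bs u \mapsto (\bs u^\Re, \bs u^\Im)$ is an isometry between $(\bb C^m, \|\cdot\|_2)$ and $(\bb R^{2m}, \|\cdot\|_2)$ (equivalently $\bb R^{m\times 2}$ with its Frobenius norm). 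Hence it suffices to bound the Lipschitz constant, with respect to the Euclidean norm, of $f : \bb R^{2m} \to \bb R_+$, $f(\bs v) := \sum_{i=1}^m \|\bs v_{(i)}\|_2$, where $\bs v_{(i)} \in \bb R^2$ denotes the $i$-th coordinate pair of $\bs v$.

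For the upper bound $\|f\|_{\mathrm{Lip}} \leq \sqrt m$, I would take arbitrary $\bs v, \bs w \in \bb R^{2m}$, apply the reverse triangle inequality $\big|\,\|\bs a\|_2 - \|\bs b\|_2\,\big| \leq \|\bs a - \bs b\|_2$ blockwise, and then Cauchy--Schwarz over the $m$ blocks:
\[
|f(\bs v) - f(\bs w)| \;\leq\; \sum_{i=1}^m \big\|\bs v_{(i)} - \bs w_{(i)}\big\|_2 \;\leq\; \sqrt m \Big( \sum_{i=1}^m \big\|\bs v_{(i)} - \bs w_{(i)}\big\|_2^2 \Big)^{1/2} \;=\; \sqrt m\; \|\bs v - \bs w\|_2 .
\]
To see that $\sqrt m$ is exactly the Lipschitz constant (and not merely an upper bound), I would exhibit a pair attaining it, \eg $\bs w = \bs 0$ and $\bs v$ with every coordinate pair equal to $(1,0)$, for which $f(\bs v) - f(\bs w) = m$ while $\|\bs v - \bs w\|_2 = \sqrt m$, giving ratio exactly $\sqrt m$.

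This argument presents no genuine obstacle; it is essentially the remark that a block $\ell_1$-of-$\ell_2$ norm over $m$ blocks of $\bb R^2$ (equivalently, $\|\cdot\|_1$ on $\bb C^m$) has Lipschitz constant $\sqrt m$, in contrast with the plain $\ell_1$-norm on $\bb R^{2m}$, whose Lipschitz constant would be $\sqrt{2m}$. The only point deserving a little care is the bookkeeping of the identifications $\bb C^m \cong \bb R^{m\times 2} \cong \bb R^{2m}$ and their isometry property, which is exactly what guarantees that the two stated functions are covered by the same estimate; for the use made of this lemma in the sequel (via Lemma~\ref{lem:LD-concent}) only the upper bound $\sqrt m$ is actually needed.
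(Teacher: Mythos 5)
Your proof is correct and follows essentially the same route as the paper's: the identity $\|\bs u\|_1 = \|(\bs u^\Re,\bs u^\Im)\|_{2,1}$ to unify the two functions, then the reverse triangle inequality followed by Cauchy--Schwarz over the $m$ (complex, i.e.\ size-two) blocks to get the factor $\sqrt m$. Your additional example showing the constant $\sqrt m$ is actually attained is a small bonus the paper omits, and as you note only the upper bound is used downstream.
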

\begin{proof}
For all $\bs u, \bs v \in \mathbb{C}^m$, $|\|\bs u \|_1-\|\bs v \|_1|\leq \|\bs u -\bs v
\|_1 \leq \sqrt{m}\|\bs u -\bs v \|_2 $, which gives the Lipschitz
constant of $G$. The one of $G'$ follows from $\|\bs u  \|_1=\| (\bs u^\Re,\bs u^\Im)\|_{2,1}$ .
\end{proof}
\medskip
We are now ready to prove the main result of this section.
\begin{theorem}
\label{thm:ripl1l2-for-gaussian}
Let $\delta \in (0,1)$, $\sigma = \frac{1}{m} \frac{\sqrt{ 2}}{\sqrt{\pi}}$, and $\bs \Phi \sim
  \mathbb{C}\mathcal{N}^{m\times n}(0, \sigma^2)$  be a complex
  Gaussian random matrix.
  If $m\geq \frac{36}{\pi} \delta^{-2} \big[s \log\big(\frac{en}{s} (1+\frac{6}{\delta})^2 \big)+\log(\frac{2}{\eta}) \big] $, then, with probability exceeding $1-\eta$, the matrix $\bs \Phi$ satisfies the ($\ell_1, \ell_2$)-RIP($s, \delta$).
\end{theorem}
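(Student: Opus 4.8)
The plan is the classical union-bound-over-an-$\epsilon$-net argument, carried out with respect to the $\ell_1$-norm of $\bs\Phi\bs x$ rather than its $\ell_2$-norm. This is precisely the point where one cannot reduce to a real $(\ell_2,\ell_2)$-RIP statement, so every concentration and covering estimate must be done directly with the ``$\ell_1$-isometry'' $\bs x\mapsto\|\bs\Phi\bs x\|_1$, combining Lemmas~\ref{lem:Rayleigh}, \ref{lem:LD-concent} and \ref{lem:lipschitz}. The constants will be tuned so that both the net resolution and the concentration deviation equal $\delta/3$, which is what produces the factors $(1+\tfrac{6}{\delta})^2$ and $\tfrac{36}{\pi}$ in the statement.

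\emph{Pointwise concentration.} Fix a unit vector $\bs x\in\bb C^n$. Write $\bs\Phi=\sigma\,\bs\Psi$ where $\bs\Psi$ has $2mn$ i.i.d.\ $\cl N(0,1)$ real and imaginary parts, collected into $\bs\gamma$, and regard $F(\bs\gamma):=\|\bs\Phi\bs x\|_1=\sigma\|\bs\Psi\bs x\|_1$ as a function of $\bs\gamma$. By Lemma~\ref{lem:Rayleigh}, $\bb E F=\|\bs x\|_2=1$. The map $\bs\gamma\mapsto\bs\Psi\bs x$ is linear and block-diagonal across the $m$ rows, and by Cauchy--Schwarz each block has Euclidean operator norm $\|\bs x\|_2=1$; composing it with the $\sqrt m$-Lipschitz map $G$ of Lemma~\ref{lem:lipschitz} (equivalently, expressing $\bs\Phi\bs x$ through its real/imaginary parts and invoking $G'$) gives $\|F\|_{\mathrm{Lip}}\le\sigma\sqrt m=\sqrt{2/(\pi m)}$. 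Lemma~\ref{lem:LD-concent} with $r=\delta/3$ then yields
\begin{equation}
\bb P\big(\,\big|\|\bs\Phi\bs x\|_1-1\big|>\tfrac{\delta}{3}\,\big)\ \le\ 2\exp\!\big(-\tfrac{\pi m\delta^2}{36}\big).
\end{equation}

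\emph{Net and union bound, then transfer to the sphere.} Fix a support $\cl S\subset[n]$ with $|\cl S|=s$, let $W_{\cl S}=\{\bs u\in\bb C^n:\supp\bs u\subseteq\cl S\}$ (a real $2s$-dimensional space), and take a $(\delta/3)$-net $\cl Q_{\cl S}$ of the $\ell_2$-unit sphere of $W_{\cl S}$ with $|\cl Q_{\cl S}|\le(1+\tfrac6\delta)^{2s}$. Applying the previous bound to each of the $\binom ns\le(en/s)^s$ supports and each of their net points, the total failure probability is at most $2(en/s)^s(1+\tfrac6\delta)^{2s}\exp(-\tfrac{\pi m\delta^2}{36})=2\exp\!\big(s\log(\tfrac{en}{s}(1+\tfrac6\delta)^2)-\tfrac{\pi m\delta^2}{36}\big)$, which is $\le\eta$ exactly under the hypothesis on $m$. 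On the complementary event, $\big|\|\bs\Phi\bs q\|_1-1\big|\le\delta/3$ for every net point $\bs q$ of every support. To pass to all sparse vectors, fix $\cl S$ and set $A_{\cl S}:=\sup\{\|\bs\Phi\bs x\|_1:\bs x\in W_{\cl S},\,\|\bs x\|_2=1\}$, finite because $\bs x\mapsto\|\bs\Phi\bs x\|_1$ is a seminorm on $W_{\cl S}$; for unit $\bs x\in W_{\cl S}$ choose $\bs q\in\cl Q_{\cl S}$ with $\|\bs x-\bs q\|_2\le\delta/3$, so that (using $\bs x-\bs q\in W_{\cl S}$ and homogeneity) $\|\bs\Phi\bs x\|_1\le(1+\tfrac\delta3)+\tfrac\delta3 A_{\cl S}$, hence $A_{\cl S}\le\frac{1+\delta/3}{1-\delta/3}\le1+\delta$ for $\delta\in(0,1)$, and then $\|\bs\Phi\bs x\|_1\ge(1-\tfrac\delta3)-\tfrac\delta3 A_{\cl S}\ge1-\delta$. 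Homogeneity extends both bounds to all of $W_{\cl S}$, and the union over $\cl S$ covers $\bar\Sigma^n_s$, giving the $(\ell_1,\ell_2)$-RIP$(s,\delta)$.

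I expect the only genuinely delicate point to be the pointwise concentration step — specifically, the bookkeeping that turns Lemma~\ref{lem:Rayleigh} (the expectation), Lemma~\ref{lem:lipschitz} (Lipschitz constant of $G$, resp.\ $G'$) and the operator-norm estimate for the linear Gaussian map into the single Lipschitz constant $\sigma\sqrt m$, since this is where the complex structure and the $\ell_1$-norm interact and where a naive real recasting fails. The covering and sphere-transfer steps are routine once $\epsilon=r=\delta/3$ is fixed, and the arithmetic then reproduces the stated sample-complexity bound verbatim.
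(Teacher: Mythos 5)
Your proposal is correct and follows essentially the same route as the paper: pointwise concentration of $\|\bs\Phi\bs x\|_1$ around $\|\bs x\|_2$ via Gaussian Lipschitz concentration with constant $\sigma\sqrt m=\sqrt{2/(\pi m)}$, a union bound over a $(\delta/3)$-net of the $s$-sparse unit vectors of cardinality $\binom{n}{s}(1+\tfrac{6}{\delta})^{2s}$, and a transfer to all sparse vectors, yielding the identical failure probability and sample-complexity bound. The only (harmless) differences are cosmetic: the paper reduces the concentration step to $2m$ standard Gaussians via the variables $\Gamma_i^{\Re},\Gamma_i^{\Im}$ rather than composing with the linear map $\bs\gamma\mapsto\bs\Psi\bs x$, and it extends from the net by a recursive geometric-series decomposition $\bs x=\bs u+\bs r$ rather than your bounded-supremum fixed-point argument, both giving $\tfrac{t+\rho}{1-\rho}\le\delta$ for $t=\rho=\delta/3$.
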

\begin{proof}
The proof strategy follows the one developed in~\cite{BDDW08} for
proving that real Gaussian random matrices satisfy the
($\ell_2,\ell_2$)-RIP \whp. By homogeneity of the $(\ell_1,\ell_2)$-RIP, it is enough to
prove that complex Gaussian random matrices satisfy it \whp for all vectors of $\tilde\Sigma^{n}_{s} :=
\bar\Sigma^n_s \cap \bar{\mathbb{B}}^n$.

We first show that for a fixed vector
$\bs x \in \mathbb{C}^n$, $\|\bs \Phi \bs x \|_1$ concentrates around
$\| \bs x \|_2$. Using the \rvs $\Gamma^\Re_i, \Gamma^\Im_i$ defined in the proof of Lemma~\ref{lem:Rayleigh}, we can write 
\begin{align}
p&\ts:=\mathbb{P}\big(\,\big| \|\bs \Phi \bs x \|_1 - \| \bs x \|_2 \big| > t \| \bs x \|_2 \big)\\
  &\ts=\mathbb{P}\big(\, \big| \sum_{i=1}^m  \big((\Gamma^\Re_i)^2 +  (\Gamma^\Im_i)^2\big)^{1/2}- \| \bs x \|_2 \big| > t \| \bs x \|_2 \big)\\
  &\ts=\mathbb{P}\big(\, \big| \sum_{i=1}^m \big((\gamma^\Re_i)^2 +  (\gamma^\Im_i)^2\big)^{1/2} - m \sqrt{\frac{\pi}{2}} \big| > t m  \sqrt{\frac{\pi}{2}} \big),
\end{align}
where we defined the independent Gaussian random vectors $\bs
\gamma^\Re,\bs \gamma^\Im\dist_{\iid} \mathcal{N}^{m }(0,1)$. Since $\sum_{i=1}^m \big((\gamma^\Re_i)^2 +  (\gamma^\Im_i)^2\big)^{1/2} = \|(\bs \gamma^\Re,\bs \gamma^\Im) \|_{2,1}$, Lemma~\ref{lem:LD-concent} provides
\begin{align}
p&\ts = \mathbb{P}\big(\,\big|\,\|(\bs \gamma^\Re,\bs \gamma^\Im) \|_{2,1} - m \sqrt{\frac{\pi}{2}} \big| > t m  \sqrt{\frac{\pi}{2}} \big)\\
&\ts \leq 2\exp{\big(-\frac{\pi}{4} t^2 m\big)}
\label{eq:concentr}
\end{align}
by considering $\bs \gamma = (\bs \gamma^\Re,\bs \gamma^\Im)$ as a
$2m$ Gaussian random vector, with the function $F(\bs \gamma) :=\| (\bs
\gamma^\Re,\bs \gamma^\Im)\|_{2,1}$ whose Lipschitz constant is
characterized in Lemma~\ref{lem:lipschitz}. Therefore, given $\bs x$
and $t > 0$,
we have
$$
\big| \|\bs \Phi \bs x \|_1 - \| \bs x \|_2 \big| \leq t \| \bs x \|_2
$$
with probability exceeding $1-p \geq 1 - 2\exp{\big(-\frac{\pi}{4} t^2 m\big)}$.

We now extend this result to all vectors of $\tilde\Sigma^{n}_{s}$ 
by first determining when this concentration holds for all the vectors of a
$\rho$-covering of this domain --- that is a set such that all elements
of $\tilde\Sigma^{n}_{s}$ are no more than $\rho > 0$ far apart from an
element of this covering --- and by finally extending this property to $\tilde\Sigma^{n}_{s}$ by continuity.

Let us first build this covering. We note that $\tilde\Sigma_s^{n}=
\bigcup_{\cl S \subset [n]: |\cl S| = s}
\tilde\Sigma^{n}(\mathcal{S})$, with
$\tilde\Sigma^{n}(\mathcal{S}) := \{\bs u \in \bar{\bb B}^n: \supp
\bs u = \cl S\}$. Moreover, $\tilde\Sigma^{n}(\mathcal{S})$ is
isomorphic to $\bar{\mathbb{B}}^s$, and thus to
$\mathbb{B}^{2s}$. Since this last set, and thus $\tilde\Sigma^{n}(\mathcal{S})$, can be covered with no more than
$(1+\frac{2}{\rho})^{2s}$ vectors~\cite{BDDW08}, a covering $\cl J_\rho$ of
$\tilde\Sigma_s^{n}$ can be reached by gathering all coverings --- ${n
  \choose s}$ in total --- so
that
$$
\ts |\cl J_\rho| \leq {n \choose s} (1+\frac{2}{\rho})^{2s} \leq
(\frac{en}{s})^s (1+\frac{2}{\rho})^{2s}.
$$
Interestingly, by design, this covering is such that all
$\bs x \in \tilde\Sigma_s^{n}$ can be written as $\bs x = \bs u +
\bs r$ with $\bs u \in \cl J_\rho \subset \tilde\Sigma_s^{n}$, $\bs
r \in \rho \bar{\bb B}^n\cap \tilde\Sigma_s^{n} = \rho \tilde\Sigma_s^{n}$, with $\supp \bs x =
\supp \bs u = \supp \bs r$.

Using~\eqref{eq:concentr}, by union bound over all the vectors of $\cl J_\rho$, the
event 
\begin{equation}
  \label{eq:event-concent-cover}
  \cl E_{\rho,t}:\quad \big | \|\bs \Phi \bs u \|_1  - \| \bs u \|_2 |\ \leq\ t,\quad \forall \bs u \in \cl J_\rho,  
\end{equation}
holds with failure probability $p_{\rho,t} := \bb P(\cl E_{\rho,t}^c)$ at most
\begin{equation}
\label{eq:concentr_cover}
\ts p_{\rho,t} \leq 2\big(\frac{en}{s}\big)^s\big(1+\frac{2}{\rho}\big)^{2s}  \exp{\big(-\frac{\pi}{4} t^2 m\big)}.
\end{equation}

Let us assume $\cl E_{\rho,t}$ holds and pick an arbitrary $\bs x \in
\tilde\Sigma^{n}_s$. As explained above, we can write $\bs x = \bs u + \bs r$ with $\bs u \in \cl J_\rho$, $\bs
r \in \rho \tilde\Sigma_s^{n}$, and $\supp \bs x =
\supp \bs u = \supp \bs r$. 

Using~\eqref{eq:event-concent-cover}, and the properties of the covering, we get
 \begin{align}
&\ts | \|\bs \Phi \bs x \|_1 -\|\bs x \|_2|= | \|\bs \Phi (\bs u +\bs r)  \|_1 -\|(\bs u +\bs r) \|_2| \\
&\ts\leq|\|\bs \Phi \bs u\|_1-\|\bs u\|_2|+ |\|\bs \Phi (\bs u +\bs r)\|_1  -\|\bs \Phi \bs u\|_1|\\
&\ts+| \|\bs u + \bs r\|_2- \|\bs u\|_2| \ts \leq t +\rho + \rho \|\bs \Phi (\rho^{-1}\bs r)\|_1,
\label{eq:trickA}
 \end{align}
where we used multiple times the triangular inequality. However, $\rho^{-1}\bs r \in \tilde\Sigma_s^{n}$ and we can recursively apply the same development to $\|\bs \Phi (\rho^{-1}\bs r)\|_1$, so that     
$$
\ts | \|\bs \Phi \bs x \|_1 -\|\bs x \|_2| \leq (t+\rho) \sum_{k=0}^{+\infty} \rho^k = \frac{t+\rho}{1 - \rho}.
$$

Setting $t=\rho=\delta/3$ for some $0<\delta<1$, we get $\frac{t+\rho}{1 - \rho} \leq \delta$. From the analysis of $\cl E_{\rho,t}$ above, we finally obtain that $| \|\bs \Phi \bs x \|_1 -\|\bs x \|_2| \leq \delta$ holds true for all $\bs x \in \tilde \Sigma^n_s$ --- \ie the ($\ell_1,\ell_2$)-RIP is verified --- with failure probability at most 
$$
\ts p_{\frac{\delta}{3}, \frac{\delta}{3}} \leq   2\big(\frac{en}{s}\big)^s\big(1+\frac{6}{\delta}\big)^{2s}  \exp{\big(-\frac{\pi}{36} \delta^2 m\big)}.
$$
We conclude the proof by observing that $p_{\frac{\delta}{3}, \frac{\delta}{3}} \leq \eta$ for $0<\eta<1$ if $m\geq \frac{36}{\pi} \delta^{-2} \big[s \log\big(\frac{en}{s} (1+\frac{6}{\delta})^2 \big)+\log(\frac{2}{\eta}) \big]$
\end{proof}
\vs
\section{Simulations}
\label{sec:simulations}
We now assess the tightness of our theoretical analysis through Monte Carlo simulations. We do not aim to demonstrate the superiority of~\eqref{eq:PBP} over other methods but to study the potentialities of such a simple algorithm in PO-CS.

As a first experiment, we have tested the estimation of complex sparse signals $\bs x_0$ in $\bb C^n$ with $n=256$ for different sparsity levels $s \in [n]$ and measurement number $m$. Two acquisition strategies were compared: the phase-only acquisition fixed by the model~\eqref{eq:phase-only-sensing-model}, and classical compressive sensing where we directly acquire the measurement vector $\bs y := \bs \Phi \bs x_0$ without alteration.  For each combination of $s$ and $m$, the performances of both strategies have been tested over 100\,000 generations of the sparse signal $\bs x_0$ and the complex Gaussian random matrix $\bs \Phi \sim \bb C\cl N(0, \sigma^2)$, with $\sigma^2$ set to $2/(\pi m^2)$ and $1/m$ for the phase-only and the CS scenario, respectively. Each sparse signal $\bs x_0$ was created by picking a $s$-sparse support uniformly at random amongst the $n \choose s$ possible supports, inserting in this support $s$ \iid complex values picked uniformly at random 
before normalizing.
We analyzed the reconstruction error of the signal direction with the metric $\cl E(\bs x_0, \hat{\bs x}) := \|\bs x_0 -\|\hat{\bs x}\|_2^{-1} \hat{\bs x}\|_2$, where $\hat{\bs x}$ is the~\eqref{eq:PBP} estimate.
\begin{figure}[!t]
\centering
\begin{tikzpicture}

\definecolor{color0}{rgb}{1,0.647058823529412,0}
\definecolor{color1}{rgb}{0.933333333333333,0.509803921568627,0.933333333333333}

\begin{axis}[
width=0.6\columnwidth,
height=0.8*\htex,
tick align=outside,
tick pos=left,
x grid style={white!69.01960784313725!black},
xlabel={\scriptsize $\log_2(\frac{m}{n})$},
xmin=-6., xmax=4.,
xtick style={color=black},
y grid style={white!69.01960784313725!black},
ylabel={\scriptsize $10\log_{10}(\|\bs x_0 -\frac{\hat{\bs x}}{\|\hat{\bs x}\|}\|_2)$},
ymin=-19, ymax=11.,
ytick style={color=black}
]
\addplot [thick, red, dashed,mark=*, mark size=\sizep, mark options={solid}]
table {%
-6 1.11243847158974
-4 -2.19602027556124
-2 -7.21078128648834
0 -11.2878435263983
2 -14.8777905606686
4 -18.2258465459564
};
\addplot [thick, red,mark=*, mark size=\sizep, mark options={solid}]
table {%
-6 1.26096659043204
-4 -1.46289792447115
-2 -6.44871320259678
0 -10.6256988444573
2 -14.28000096159
4 -17.6637072106488
};
\addplot [thick, blue, dashed,mark=*, mark size=\sizep, mark options={solid}]
table {%
-6 1.25882372315679
-4 -0.42435704946086
-2 -4.10268012690458
0 -8.11111528471988
2 -11.8600898785594
4 -15.3232364836426
};
\addplot [thick, blue,mark=*, mark size=\sizep, mark options={solid}]
table {%
-6 1.3298168828923
-4 0.00912456760578186
-2 -3.41747972403624
0 -7.43630494697117
2 -11.2309103715808
4 -14.739141598999
};
\addplot [thick, green!50.19607843137255!black, dashed,mark=*, mark size=\sizep, mark options={solid}]
table {%
-6 1.33174504817244
-4 0.59960328327647
-2 -1.86019579390291
0 -5.35478423913879
2 -9.00375098167221
4 -12.5449196121235
};
\addplot [thick, green!50.19607843137255!black,mark=*, mark size=\sizep, mark options={solid}]
table {%
-6 1.36232950715413
-4 0.812953461491129
-2 -1.32415033756304
0 -4.72938729154686
2 -8.37299257975443
4 -11.9387273204955
};
\addplot [thick, color0, dashed,mark=*, mark size=\sizep, mark options={solid}]
table {%
-6 1.33367699836556
-4 0.906583712893597
-2 -0.736439599950016
0 -3.79609287988756
2 -7.25990834708337
4 -10.7413476672853
};
\addplot [thick, color0,mark=*, mark size=\sizep, mark options={solid}]
table {%
-6 1.35545835788413
-4 1.02524291184372
-2 -0.327397312746119
0 -3.21918462005425
2 -6.64941621691661
4 -10.1414904429507
};
\addplot [thick, color1, dashed,mark=*, mark size=\sizep, mark options={solid}]
table {%
-6 1.30116259967353
-4 1.00724012839566
-2 0.0880021604146752
0 -2.19037987382712
2 -5.31269074598933
4 -8.65075941206365
};
\addplot [thick, color1,mark=*, mark size=\sizep, mark options={solid}]
table {%
-6 1.32090297666823
-4 1.07830127501019
-2 0.32766715710494
0 -1.70674824435146
2 -4.74255942880031
4 -8.06840631526832
};
\addplot [thick, white!50.19607843137255!black, dash pattern=on 1pt off 3pt on 3pt off 3pt]
table {%
-6 8.98970004336019
-4 5.97940008672038
-2 2.96910013008056
0 -0.0411998265592484
2 -3.05149978319906
4 -6.06179973983887
};
\addplot [thick, black, dash pattern=on 1pt off 3pt on 3pt off 3pt]
table {%
-6 10.4948500216801
-4 8.98970004336019
-2 7.48455006504028
0 5.97940008672038
2 4.47425010840047
4 2.96910013008056
};
\end{axis}

\end{tikzpicture}
\caption{(Best viewed in color) Reconstruction error of~\eqref{eq:PBP} for different measurement models. (dashed lines) compressive sensing; (solid lines) phase-only measurements. The colors represent the sparsity, namely $s=2$ in red, $s=4$ in blue, $s=10$ in green, $s=20$ in orange, and $s=50$ in pink. The dotted lines represent the rates of $m^{-\frac{1}{2}}$ in gray and $m^{-\frac{1}{4}}$ in black.\vs \vs
\label{fig:rec-error-pbp-CS-vs-po}}
\end{figure}
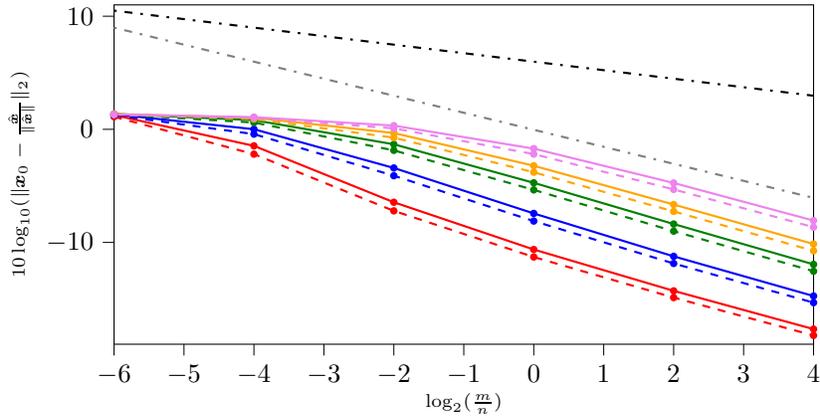
Comparing the two schemes in Fig.~\ref{fig:rec-error-pbp-CS-vs-po} for different sparsity levels, we observe that the reconstruction error achieved from phase-only measurements exhibits good performances given the absence of the amplitude information. The experimental convergence rate is also matching the one of the CS scheme; it scales as $m^{-\frac{1}{2}}$ when $m$ increases instead of the pessimistic rate in $m^{-\frac{1}{4}}$ predicted by the theory in~\eqref{eq:PBP-decay-in-m}. The phase-only scheme seems to only suffer from a constant loss (in dB) when compared to the classic model.    
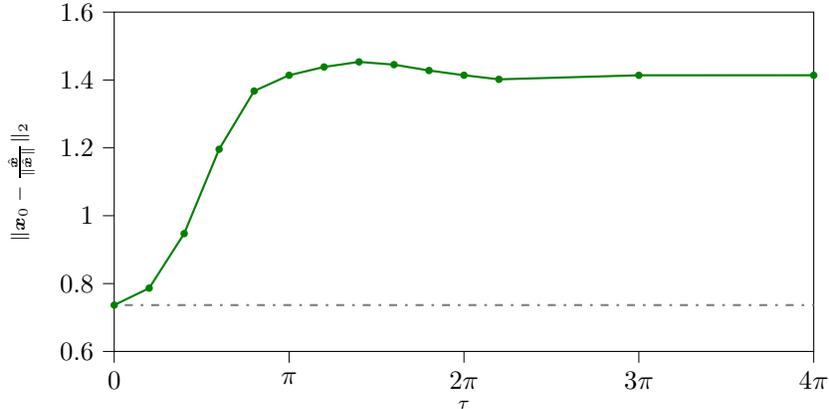
\begin{figure}[!t]
    \centering \centering
\begin{tikzpicture}

\begin{axis}[
width=0.6\columnwidth,
height=0.8*\htex,
tick align=outside,
tick pos=left,
x grid style={white!69.01960784313725!black},
xlabel={\scriptsize $\tau$},
xmin=0, xmax=4,
xtick style={color=black},
y grid style={white!69.01960784313725!black},
ylabel={\scriptsize $\|\bs x_0 -\frac{\hat{\bs x}}{\|\hat{\bs x}\|}\|_2$},
ymin=0.6,
ymax=1.6,
ytick style={color=black},
xtick={0,1,2,3,4},
xticklabels={$0$,$\pi$,$2\pi$,$3\pi$,$4\pi$}
]
\addplot [thick, green!50.19607843137255!black,mark=*, mark size=\sizep, mark options={solid}]
table {%
0 0.736640888843304
0.2 0.786610771897557
0.4 0.947341417436198
0.6 1.19585077913719
0.8 1.36743298280552
1 1.41399462231784
1.2 1.4384026758848
1.4 1.45333136330619
1.6 1.44523528092943
1.8 1.42798469530698
2 1.41398840688393
2.2 1.40190177210711
3 1.41389069329749
4 1.41397628693168
};
\addplot [thick, white!50.19607843137255!black, dash pattern=on 1pt off 3pt on 3pt off 3pt]
table {%
0 0.736640888843304
0.2 0.736640888843304
0.4 0.736640888843304
0.6 0.736640888843304
0.8 0.736640888843304
1 0.736640888843304
1.2 0.736640888843304
1.4 0.736640888843304
1.6 0.736640888843304
1.8 0.736640888843304
2 0.736640888843304
2.2 0.736640888843304
3 0.736640888843304
4 0.736640888843304
};
\end{axis}

\end{tikzpicture}
    \caption{Reconstruction error of~\eqref{eq:PBP} for noiseless (dashed lines) and noisy measurements (solid lines) for different $\tau$ with $s=10$ and $M=64$.\vs \vs}
    \label{estnoise}
\end{figure}

In a second experiment, we have studied the performances of PBP in the presence of phase noise. In this new test, we kept the same parameters as above, restricting only the sparsity level and the number of measurements to $s=10$ and $m=64$, respectively. The phase noise $\bs \xi$ in~\eqref{eq:phase-only-sensing-model} was generated according to a uniform distribution between $-\tau$ and $\tau$, with $\tau \in [0, 4\pi]$. As established~\eqref{thm:bound-pbp-reconstr}, the reconstruction error $\cl E(\bs x_0, \hat{\bs x})$ increases almost linearly when $\tau$ increases from 0 to $\pi$, before saturating at $\sqrt 2$ from $\tau > \pi$. In other words, from that noise level, phase-only measurements are too noisy and $\scp{\bs x_0}{\hat{\bs x}} \approx 0$. 
Furthermore, the additive nature of the degradation in~\eqref{thm:bound-pbp-reconstr} is clearly visible when comparing the noiseless in dashed gray and noisy reconstruction in solid green.
\vs
\section{Conclusion}
\label{sec:conclusion}
In this paper, we have studied how to estimate the direction of complex sparse vectors from noisy phase-only measurements. We proved theoretically that the estimate yielded by the projected back projection of noisy phase-only measurement has bounded and stable reconstruction error provided that the sensing matrix satisfies an extension of the ($\ell_1, \ell_2$)-RIP in the complex field. Moreover, we showed that $m \times n$ complex Gaussian random matrices respect \whp this property with distortion $\delta > 0$ provided that $m$ is large compared to the signal sparsity level $s$, \ie $m = O(\delta^{-2} s \log(\frac{n}{\delta s}))$. The proof of this result leverages the tools of measure concentration since the $\ell_1$-norm prevents a simple recasting of the complex ($\ell_1, \ell_2$)-RIP to a real domain of larger dimension. We finally analyzed the tightness of our theoretical developments through Monte Carlo simulations. They confirmed that, despite the lack of amplitude information, we can reach arbitrary high accuracy on the estimation of sparse signal direction provided $m/s$ is large, with an experimental error rate decaying as $1/\sqrt{m}$ when $m$ increases, thus faster than our theoretical error rate in $1/m^{1/4}$. The discrepancy between this two rates will be studied in future work, as well as the impact of phase quantization and additive noise on the phase-only sensing model.

\newpage

\end{document}